\newcommand{\eg}{e.\,g.,\ }
\newcommand{\Sn}{ {\mathbb{S}_n} }   
\newtheorem{remark}{Remark}[section]
\newtheorem{theorem}{Theorem}[section]
\newtheorem{definition}{Definition}[section]
\newcommand{\C}{\mathbb{C}}
\renewcommand{\S}{{\mathcal S}}
\newcommand{\M}{{\mathcal M}}
\newcommand{\V}{{\mathcal V}}
\newcommand{\W}{{\mathcal W}}
\newcommand{\R}{\mathbb{R}}
\newcommand{\XWpd}{ {\mathbb{X}^{\raisebox{0.2em}{{\fontsize{3}{2}\selectfont $>$}}}} }  
\newcommand{\XWpdpd}{ {\mathbb{X}^{\raisebox{0.2em}{{\fontsize{3}{2}\selectfont $\gg$}}}} }  
\newcommand{\diag}[1]{\ensuremath{\mathop{\mathrm{diag}}\left( #1 \right)}}
\newcommand{\ie}{i.\,e.\ }
\newcommand{\mat}[3][C]{
	\mathbb{#1}^{
		\IfSubStr{#2}{+}{(#2)}{#2}
		\times
		\IfSubStr{#3}{+}{(#3)}{#3}
	}
}
\DeclareDocumentCommand{\matz}{m m O{K} O{z}}{\mathbb{#3}[{#4}]^
	{
		\IfSubStr{#1}{+}{(#1)}{#1}
		\times
		\IfSubStr{#2}{+}{(#2)}{#2}
	}}
\date{\today}
\newenvironment{customlegend}[1][]{%
    \begingroup
    \csname pgfplots@init@cleared@structures\endcsname
    \pgfplotsset{#1}%
}{  \csname pgfplots@createlegend\endcsname
    \endgroup
}
\def\addlegendimage{\csname pgfplots@addlegendimage\endcsname}
\newcommand{%
	\input{Figures/.tikz}%
}[1]{%
	\input{Figures/#1.tikz}%
}
\newlength\fheight
\newlength\fwidth
\journal{***}
\begin{document}

\begin{frontmatter}
\title{Identification of Port-Hamiltonian Systems from Frequency Response Data}

\author[mainaddress]{Peter Benner}
\ead{benner@mpi-magdeburg.mpg.de}

\author[mainaddress]{Pawan Goyal \corref{mycorrespondingauthor}}
\cortext[mycorrespondingauthor]{Corresponding author. Phone: +49 391 6110 386, Fax: +49 391 6110 453}
\ead{goyalp@mpi-magdeburg.mpg.de}

\author[mainaddress1]{Paul Van Dooren}
\ead{paul.vandooren@uclouvain.be}


\address[mainaddress]{Max Planck Institute for Dynamics of Complex Technical Systems, Sandtorstr. 1, 39106 Magdeburg, Germany}
\address[mainaddress1]{Universit\'e catholique de Louvain, Louvain-La-Neuve, Belgium}



\date{\today}
\address{}
\begin{abstract} 
	In this paper, we study the identification problem of a passive system from tangential interpolation data.  We present a simple  construction approach based on the Mayo-Antoulas generalized realization theory that automatically yields a port-Hamiltonian realization for every strictly passive system with simple spectral zeros. Furthermore, we discuss the construction of a frequency-limited port-Hamiltonian realization. We illustrate the proposed method by means of several examples. 
\end{abstract}

\begin{keyword}
   Passive systems\sep port-Hamiltonian system\sep identification \sep tangential interpolation
\end{keyword}

\end{frontmatter}

\section{Introduction} \label{sec:Intro}
We study linear and finite-dimensional dynamical systems that are \emph{passive}. We look at port-Hamiltonian realizations of such transfer functions  
which play an important role in the robustness of passive systems. We consider continuous-time systems that can be represented in the standard 
state-space form with real coefficients and real inputs, outputs and states:
\begin{equation} \label{gstatespace}
 \begin{aligned} \dot x(t) & =  Ax(t) + B u(t),\ \quad x(0)=0,\\
y(t)&= Cx(t)+Du(t).
\end{aligned}
\end{equation}
Denoting real and complex $n$-vectors ($n\times m$ matrices) by $\mathbb R^n$, $\mathbb C^{n}$ ($\mathbb R^{n \times m}$, $\mathbb{C}^{n \times m}$), respectively, 
then $u:\mathbb R\to\mathbb{R}^m$,   $x:\mathbb R\to \mathbb{R}^n$,  and  $y:\mathbb R\to\mathbb{R}^m$  are vector-valued functions denoting the \emph{input}, \emph{state}, 
and \emph{output} of the system, and the coefficient matrices satisfy $A \in \mathbb{R}^{n \times n}$, $B\in \mathbb{R}^{n \times m}$, $C\in \mathbb{R}^{m \times n}$, 
and  $D\in \mathbb{R}^{m \times m}$.
The Hermitian (or conjugate) transpose of a vector or matrix $V$ is denoted by $V^{\mathsf{H}}$ $\left(V^{\mathsf{T}}\right)$ and the identity matrix is 
denoted by $I_n$ or $I$ if the dimension is clear. We furthermore require that input and output dimensions are equal to $m$ since we aim to interpolate with (square) passive transfer functions.  
We denote the set of symmetric matrices in $\mathbb{R}^{n \times n}$ by $\Sn$.
Positive definiteness (semi-definiteness) of  $M\in \Sn$ is denoted by $M\succ 0$ ($M\succeq 0$).  

Model-order reduction for passive systems has been an active research area and has been investigated by several researchers in \eg \cite{morBenQQ04a,morDanP02,morFreF96a,morPhiDS03,gugercin2012structure,polyuga2012effort,wolf2010passivity}. However,  this requires the availability of system matrices, which may not be easily available, especially when the necessary parameters to model a dynamical process are not known. Hence, we aim at identifying system realizations using frequency response. The structure of the paper is as follows. In Section 2, we briefly recall some important properties of passive systems. Subsequently, in Section 3, we discuss state-space representations and properties of port-Hamiltonian realizations. This is followed by a discussion of degrees of freedom of a port-Hamiltonian system in the subsequent section in order to have an understanding how many parameters are needed to describe a minimal port-Hamiltonian system. In Section 5, we propose a variant of the Loewner-based approach, realizing the system in port-Hamiltonian form when data are available at spectral zeros along with zero directions. Furthermore, we discuss the estimation of the dominant spectral zeros and zero directions using the data given on the imaginary axis in Section 6. In Section 7, we illustrate the proposed identification approach by means of a couple of numerical examples, which is followed by a short summary. 

\section{Passive Systems and Port-Hamiltonian Realizations} \label{sec:PH}
\emph{Passive} systems and their relationships with \emph{positive-realness and stability conditions} are well studied.
We briefly recall some important properties by following \cite{Wil72b}, and refer to the literature for a more detailed survey. We consider continuous-time systems with a real
rational transfer matrix $Z(s)$ and define the associated spectral density function: 
\begin{equation} \label{Phi} \Phi(s):=Z^{\mathsf{T}}(-s) + Z(s), 
\end{equation}
which coincides with the Hermitian part of $Z(s)$ on the  $\imath\omega$ axis:
\[ \Phi(\imath \omega)= [Z(\imath \omega)]^\mathsf{H} + Z(\imath \omega). \] 
\begin{definition}
The rational transfer function $Z(s)$ is called {\em strictly positive-real} \/if  $\Phi(\imath\omega)\succ 0$
for all $\omega\in \mathbb{R}$ and it is called \emph{positive-real} if $\Phi(\imath \omega)\succeq 0 $ for all $\ \omega\in \mathbb{R}$.

The transfer function $Z(s)$ is called {\em asymptotically stable} if the poles of the transfer function are in the open left half-plane, and it is called 
{\em stable} if all the poles are in the closed left half-plane, with any pole occurring on the imaginary axis being first-order.

The transfer function $Z(s)$ is called {\em strictly passive} if it is strictly positive-real and asymptotically stable and it is called \emph{passive} if it is positive-real and stable. 
\end{definition}

We will assume throughout the paper that this realization is minimal (\ie controllable and observable) and will restrict ourselves in this paper to {\em strictly} passive systems, which implies that the matrix $A$ is invertible 
and that the transfer matrix is proper since  poles cannot be on the imaginary axis or at infinity. 
Moreover, $\Phi(\imath\omega)\succ 0$ at $\omega=\infty$, implies that we must have $D^{\mathsf{T}}+D\succ 0$ as well. 
We will see that this restriction simplifies our discussion significantly. 
It is also a reasonable restriction because passive systems can be viewed as limiting cases of strictly passive systems.

Since the transfer function is proper, we can represent it in standard state-space form $Z(s)=C(sI_n-A)^{-1}B+D$. For proper transfer functions $Z(s)$ with minimal realization $\M:=\{A,B,C,D\}$, there is a necessary and sufficient condition for passivity, known 
as the Kalman-Yakubovich-Popov linear matrix inequality. An elegant proof of this can be found in \cite{Wil72b}. 
\begin{theorem}
Let  $\M:=\left\{A,B,C,D\right\}$ be a minimal realization of a proper rational transfer function $Z(s)$ and let 
\begin{equation} \label{LMI}
 W(X,\M)=\begin{bmatrix} -A^\mathsf{T} X - XA & C^\mathsf{T}- XB \\ C-B^\mathsf{T}X & D+D^\mathsf{T} \end{bmatrix}.
\end{equation}
Then, $Z(s)$ is passive if and only if there exists a real symmetric matrix $X\in \Sn$ such that
\begin{equation} \label{KYP1}    W(X,\M) \succeq 0 , \quad X\succ 0,
\end{equation}
and is  strictly passive if and only if there exists a real symmetric matrix $X\in \Sn$ such that
\begin{equation} \label{KYP2}    W(X,\M) \succ 0 , \quad X \succ 0.
\end{equation}
\end{theorem}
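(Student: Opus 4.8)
The plan is to treat the two implications separately and to base the direction ``solvability of the LMI $\Rightarrow$ passivity'' on a single algebraic identity. Write $G(s):=(sI_n-A)^{-1}B$. The first step is to verify that, for every symmetric $X\in\Sn$ and every real $\omega$ at which $Z$ has no pole,
\begin{equation}\label{eq:kyp-identity}
\Phi(\imath\omega)=\begin{bmatrix}G(\imath\omega)\\ I_m\end{bmatrix}^{\mathsf H}W(X,\M)\begin{bmatrix}G(\imath\omega)\\ I_m\end{bmatrix}.
\end{equation}
This falls out by rewriting, on the imaginary axis where $s+\bar s=0$, the $(1,1)$ block as $-A^{\mathsf T}X-XA=(\bar sI_n-A^{\mathsf T})X+X(sI_n-A)$ and using $(sI_n-A)G(s)=B$ together with $G(s)^{\mathsf H}(\bar sI_n-A^{\mathsf T})=B^{\mathsf T}$; the $X$-dependent terms then cancel against the $-XB$ parts of the off-diagonal blocks, leaving exactly $[CG(\imath\omega)+D]^{\mathsf H}+[CG(\imath\omega)+D]=\Phi(\imath\omega)$. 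Given \eqref{eq:kyp-identity}, $W(X,\M)\succeq0$ with $X\succ0$ immediately yields $\Phi(\imath\omega)\succeq0$ for all $\omega$, and $W(X,\M)\succ0$ yields $\Phi(\imath\omega)\succ0$ because the tall matrix in \eqref{eq:kyp-identity} has full column rank $m$.

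The stability part of that same direction I would argue from the $(1,1)$ block alone. Being a principal submatrix of a (semi)definite matrix, $-A^{\mathsf T}X-XA$ is positive semidefinite, so $A^{\mathsf T}X+XA\preceq0$; together with $X\succ0$ this Lyapunov inequality already places $\sigma(A)$ in the closed left half-plane. That an eigenvalue $\imath\omega$ of $A$ must be semisimple follows by taking $Av=\imath\omega v$, observing that $v^{\mathsf H}(A^{\mathsf T}X+XA)v=0$ forces $(A^{\mathsf T}X+XA)v=0$, and checking that a length-two Jordan chain $Av_2=\imath\omega v_2+v_1$ would imply $v_1^{\mathsf H}Xv_1=0$, contradicting $X\succ0$. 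Since $\M$ is minimal, the poles of $Z$ are exactly the eigenvalues of $A$, and an imaginary-axis pole is first-order precisely because the corresponding eigenvalue is semisimple; hence $Z$ is stable, and therefore passive. In the strict case $-A^{\mathsf T}X-XA\succ0$ forces $A$ to be Hurwitz, so $Z$ is asymptotically stable, and combined with $\Phi(\imath\omega)\succ0$ this gives strict passivity.

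The converse --- passivity $\Rightarrow$ solvability of the LMI --- is the substantive part, and I would obtain $X$ from the algebraic Riccati equation attached to $W$. Assume first that $Z$ is strictly passive, so $R:=D+D^{\mathsf T}\succ0$ and $\Phi(\imath\omega)\succ0$ for all $\omega$; taking the Schur complement of the $R$-block, $W(X,\M)\succeq0$ is equivalent to $-A^{\mathsf T}X-XA-(C^{\mathsf T}-XB)R^{-1}(C-B^{\mathsf T}X)\succeq0$. Because $\Phi$ is nonsingular along the imaginary axis, the Hamiltonian matrix of the Riccati equation $-A^{\mathsf T}X-XA-(C^{\mathsf T}-XB)R^{-1}(C-B^{\mathsf T}X)=0$ has no purely imaginary eigenvalues, so classical Riccati theory together with controllability and observability of $\M$ produces a real symmetric stabilizing solution; a small perturbation of the right-hand side of that equation then yields such an $X$ with the Schur complement strictly positive, i.e.\ $W(X,\M)\succ0$, and this $X$ is automatically positive definite since it satisfies $A^{\mathsf T}X+XA\prec0$ with $A$ Hurwitz. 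The merely passive case I would then reach either by a limiting argument --- approximating $Z$ by strictly passive transfer functions, applying the above to each, and passing to a subsequential limit once the LMI solution sets are seen to stay bounded --- or, equivalently, by Willems' construction of the available storage as a quadratic form, with $X\succ0$ coming from minimality.

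The step I expect to be the real obstacle is exactly this last one: securing a \emph{real symmetric} (rather than merely complex, or indefinite) solution of the LMI, and coping with a possibly singular $D+D^{\mathsf T}$ in the non-strict case. This is where the Hamiltonian/Riccati machinery --- or, along Willems' route, the analysis of the dissipation inequality and the available-storage functional --- does the genuine work, and where I would lean on the treatment cited in \cite{Wil72b}.
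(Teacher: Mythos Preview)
The paper does not prove this theorem: it states it and immediately refers the reader to \cite{Wil72b} for ``an elegant proof''. So there is no in-paper argument to compare your proposal against. Your outline follows precisely the standard route that \cite{Wil72b} takes --- the frequency-domain identity \eqref{eq:kyp-identity} for the easy direction, a Lyapunov argument for stability, and the algebraic Riccati equation (equivalently, the available-storage functional) for the converse --- so in spirit you and the cited source are doing the same thing.

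Two places in your converse sketch are softer than the rest and would need tightening if you actually wrote this out. First, the ``small perturbation'' step: the stabilizing Riccati solution $X$ only makes the Schur complement \emph{zero}, not positive, so $W(X,\M)\succeq0$; to land in $\XWpdpd$ you need the fact that strict passivity forces $X_-\prec X_+$ for the two extremal Riccati solutions, after which any $X$ strictly between them gives $W(X,\M)\succ0$. A vague perturbation of a single Riccati solution does not by itself produce a strict LMI solution. Second, the sentence ``$X$ is automatically positive definite since it satisfies $A^{\mathsf T}X+XA\prec0$ with $A$ Hurwitz'' mixes two things: for the Riccati solution the $(1,1)$ block is only $\succeq0$, and positivity of $X$ comes instead from observability together with the Lyapunov equation for the closed-loop matrix $A-BR^{-1}(C-B^{\mathsf T}X)$ (or, in Willems' language, from the storage interpretation on a minimal model). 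These are exactly the points you yourself flag as the real obstacle, and they are where \cite{Wil72b} does the work.
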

The solutions $X$ of these inequalities are known as {\em certificates} for the passivity or strict passivity of the system $\M$. 
\begin{definition}
Every solution $X$ of the LMI 
\begin{equation}
\XWpd :=\left\{ X\in \S \left|   W(X,\M) \succeq 0,\ X \succ 0 \right.\right\} \label{XpdsolWpsd} 
\end{equation}
is called a {\em certificate} for the passivity of the model $\M$ and every solution of the LMI 
\begin{equation}
\XWpdpd :=\left\{ X\in \S \left|   W(X,\M) \succ 0,\ X \succ 0 \right.\right\} \label{XpdsolWpd}
\end{equation}
is called a {\em certificate} for the strict passivity of the model $\M$.
\end{definition}

\section{Port-Hamiltonian Models}
In this section, we  provide a brief introduction to special realizations of passive systems, known as port-Hamiltonian system models.
\begin{definition}\label{def:ph}
A linear time-invariant \emph{port-Hamiltonian system model} of a proper transfer function, has the standard state-space form
\begin{equation} \label{pH}
 \begin{aligned} \dot x(t)  & =  (J-R)Q x(t) + (G-P) u(t),\quad x(0) =0,\\
y(t)&= (G+P)^{\mathsf{T}}Qx(t)+(N+S)u(t),
\end{aligned}
\end{equation}
where the system matrices 
\begin{equation} \label{sym}
\V:= \begin{bmatrix} J & G \\ -G^{\mathsf{T}} & N \end{bmatrix}, \quad
\W:= \begin{bmatrix} R & P \\ P^{\mathsf{T}} & S \end{bmatrix}, 
\end{equation}
satisfy the conditions
$$ \mathcal V =-\mathcal V^{\mathsf{T}}, \quad \mathcal W =\mathcal W^{\mathsf{T}}\succeq 0, \quad Q=Q^{\mathsf{T}} \succeq 0.
$$
\end{definition}
It readily follows from the properties of port-Hamiltonian  models that when $Q$ and $\mathcal W$ are invertible, we can choose $X=Q$ as certificate
for the model $$\M:=\left\{(J-R)Q,G-P,(G+P)^\mathsf{T}Q,N+S\right\}$$ to show that it satisfies the strict passivity condition \eqref{KYP2}. 
\begin{remark}
The condition that $Q$ is non-singular is automatically satisfied when the state transition matrix $A$ is non-singular,
which is the case for strictly passive systems. We can then also represent the system in generalized state-space form,
using $\widehat x:=Qx$, yielding:
\begin{equation} \label{gpH}
 \begin{aligned} Q^{-1} \dot {\widehat x}  & =  (J-R) \widehat x + (G-P) u,\\
y&= (G+P)^{\mathsf{T}} \widehat x+(N+S)u.
\end{aligned}
\end{equation}
We use such models for representing intermediate results later on.
\end{remark}

Conversely, let  $\M:=\left\{A,B,C,D\right\}$ be a state-space model, satisfying the strict passivity condition \eqref{KYP2} with a given certificate $X\succ 0$.
Then, it can always be transformed in the port-Hamiltonian form, as shown in \cite{BMX16}.
We can use a symmetric factorization $X= T^{\mathsf{T}}T$, which implies the invertibility of $T$, and define a new realization 
\[
\{A_T,B_T,C_T,D\} := \{TAT^{-1}, TB, CT^{-1}, D \}
\]
so that
$$
\begin{bmatrix} T^{-\mathsf{T}} & 0\\ 0 & I_m
\end{bmatrix}
\begin{bmatrix} -A^{\mathsf{T}}X-XA & C^{\mathsf{T}}-XB \\ C-B^{\mathsf{T}}X & D^{\mathsf{T}}+D
\end{bmatrix}
\begin{bmatrix} T^{-1} & 0\\ 0 & I_m
\end{bmatrix}
 $$
\begin{equation}
\label{PH}
= 
\begin{bmatrix}-A_T & -B_T \\ C_T & D
\end{bmatrix}+
\begin{bmatrix} -A_T^{\mathsf{T}} & C^{\mathsf{T}}_T \\ -B^{\mathsf{T}}_T & D^{\mathsf{T}}
\end{bmatrix}
\succ  0.
\end{equation}
We can then use the symmetric and skew-symmetric part of the matrix
\[
 \S := \begin{bmatrix}-A_T & -B_T \\ C_T & D
\end{bmatrix}
\]
to define the coefficients of a port-Hamiltonian  representation  via
\[
\V := \begin{bmatrix} J &  G \\ -G^{\mathsf{T}} & N \end{bmatrix} :=  \frac{{\S} - {\S}^{\mathsf{T}}}{2},  \;
\W := \begin{bmatrix} R & P \\ P^{\mathsf{T}} & S \end{bmatrix} := \frac{{\S} +{\S}^{\mathsf{T}}}{2}.
\]
This construction yields $\mathcal W \succ 0$ and $Q=I_n$ because of the chosen factorization $X=T^{\mathsf{T}}T$. This is called a {\em normalized}
port-Hamiltonian representation. This shows that port-Hamiltonian models with strict inequalities $Q\succ 0$ and $\mathcal W \succ 0$ 
are nothing but strictly passive systems described in an appropriate coordinate system.
On the other hand, it was shown in \cite{MeV19} that normalized port-Hamiltonian systems have good robustness properties
in terms of their so-called passivity radius.

\section{Degrees of Freedom of a Transfer Function} \label{sec:DOF}

In the literature, one can find a discussion of the degrees of freedom of a given strictly proper rational transfer function~$Z(s)$ of a given MacMillan degree $n$ \cite{HaK75}.
This corresponds to the minimum number of parameters to describe such a function. Since this literature is quite opaque, we briefly re-derive the basic results using a generic 
$m\times m$ strictly proper transfer matrix of MacMillan degree $n$ without repeated poles. Such a transfer function can be written in its partial fraction expansion as follows:
$$ 
Z(s) = \sum_{k=1}^{n_r} u_k(s-\lambda_k)^{-1}v_k^\mathsf{T}+ 
\sum_{k=1}^{(n-n_r)/2} U_k\left(sI_2-\begin{bmatrix}\alpha_k & \beta_k \\ -\beta_k & \alpha_k \end{bmatrix}\right)^{-1}V_k^\mathsf{T},
$$
which requires a total of $2(m+1)n$ real parameters.  This can be seen as a state-space model in the real Jordan form with $1\times 1$ diagonal elements for the $n_r$ real poles 
and $2\times 2$ diagonal blocks for the $n_c:=n-n_r$ complex conjugate complex poles.
But this representation is only unique up to a block diagonal state-space transformation with exactly $m$ degrees of freedom: a scalar $t_k$ for each real pole and a 
$2\times 2$ block $t_k\begin{bmatrix}c_k & s_k \\ -s_k & c_k \end{bmatrix}$ for each complex conjugate pair, where the real rotation matrix depends on one real parameter. 
When taking the quotient of the 
manifold of block-diagonal models with respect to this state space transformation, we are left with the exact number of real degrees of freedom, which is
$2mn$ for a strictly proper $m\times m$ transfer function of degree $n$ with real coefficients. 

When considering the larger class of real $m\times m$ {\em proper} rational transfer functions, one has to add the real parameters to {\em realize} the constant matrix $D$. 
If $D$ is constrained to have a particular rank, then we again need to take that into account. A rank $r$ matrix $D$ can be represented
by a rank factorization $D=UV^\mathsf{T}$ where we can again quotient out the degrees of freedom of an $r\times r$ factor $T$ in an equivalent
factorization $D=(UT)(T^{-1}V^\mathsf{T})$. Such a factor can thus be represented by $r(2m-r)$ degrees of freedom, which has to be added to those of 
the strictly proper part of $Z(s)$.  

To summarize, a real rational $m\times m$ transfer function $Z(s)$ of MacMillan degree $n$ has 
\begin{itemize}
 \item  $2mn$ real degrees of freedom when $Z(s)$ is strictly proper,
 \item  $2m(n+r) -r^2$  real degrees of freedom when $Z(s)$ is proper and $Z(\infty)$ has rank $r$.
\end{itemize}

This count of the number of degrees of freedom will determine the number of parameters we can assign using tangential 
interpolation conditions. For a rigorous discussion on these aspects, we refer to \cite{HaK75}.

\section{Loewner Approach for Identification of a port-Hamiltonian  Realization}

In this section, we discuss the identification problem of a strictly passive transfer function $Z(s)$ of degree $n$, which is defined via a set of 
left and right interpolation conditions. Since $Z(s)$ is strictly passive, it is proper and has 
a standard state-space realization $\{A,B,C,D\}$ with $D$ of full rank and positive-real (\ie $D+D^\mathsf{T}\succ 0$).
We can then define the transfer function $Z(s)$ via a set of left and right tangential interpolation conditions 
\begin{equation} \label{interpol}
v_j:= \ell_jZ(\mu_j), ~~   w_j:=Z(\lambda_j)r_j, ~~  j=1,\ldots,n, ~~ Z(\infty)= D,
\end{equation}
where $(\mu_j,\ell_j,v_j)$, and $(\lambda_j,r_j,w_j), \; j=1,\ldots, n$, are sets of self-conjugate left and right interpolation conditions with \linebreak
$\{\ell_j,v_j\}\in \C^{1\times m}$, $\{r_j,w_j\}\in \C^{m\times 1}$, $\{\lambda_j,\mu_j\}\in\C $. 
Then, the so-called {\em Loewner} and {\em shifted Loewner} matrices defined in \cite{Ant05}  have dimensions $n\times n$ and look like

\begin{subequations}\label{MaAnt} 
	\begin{align}
	\mathbb{L}& := \begin{bmatrix}
	\frac{\ell_1w_1-v_1r_1}{\lambda_1-\mu_1} & \ldots & \frac{\ell_1w_n-v_1r_n}{\lambda_n-\mu_1} \\ 
	\vdots & \ddots & \vdots \\ 
	\frac{\ell_nw_1-v_nr_1}{\lambda_1-\mu_n} & \ldots & \frac{\ell_nw_n-v_nr_n}{\lambda_n-\mu_n} \end{bmatrix}\!,\\
 \mathbb{L}_\sigma &:= \begin{bmatrix}
\frac{\lambda_1\ell_1w_1-\mu_1v_1r_1}{\lambda_1-\mu_1} & \ldots & \frac{\lambda_n\ell_1w_n-\mu_1v_1r_n}{\lambda_n-\mu_1} \\
\vdots & \ddots & \vdots \\ 
\frac{\lambda_1\ell_nw_1-\mu_nv_nr_1}{\lambda_1-\mu_n} & \ldots & \frac{\lambda_n\ell_nw_n-\mu_nv_nr_n}{\lambda_n-\mu_n}                       
\end{bmatrix}.
	\end{align}
\end{subequations}
They satisfy the following Sylvester equations 
\begin{equation} \label{Lyaplike}  \mathbb{L}\Lambda - M \mathbb{L}= LW-VR , \quad
\mathbb{L}_\sigma\Lambda - M \mathbb{L}_\sigma = LW\Lambda-M VR,  
\end{equation} 
where we used the definitions
\begin{equation} \label{LV} L := \begin{bmatrix} \ell_1 \\ \vdots \\ \ell_n \end{bmatrix}, \quad  V := \begin{bmatrix} v_1 \\ \vdots \\ v_n \end{bmatrix}, \quad
   M := \diag{\mu_1, \ldots, \mu_n},                                                        
\end{equation} 
and
\begin{align*}
R:=\begin{bmatrix} r_1,\ldots,r_n \end{bmatrix}, \quad W:=\begin{bmatrix} w_1,\ldots,w_n \end{bmatrix},
\end{align*}
\begin{equation} \label{RW}   \Lambda := \diag{\lambda_1, \ldots, \lambda_n}.                                                         
\end{equation} 
The following interpolation result follows from the theory developed in \cite{MaA07} in the special case that the Loewner matrix $\mathbb{L}$ is invertible.
\begin{theorem} \label{LoewnerIdent} 
Let $Z(s)$ be a proper transfer function of MacMillan degree $n$, then the interpolation conditions \eqref{interpol} uniquely define $Z(s)$ if
the Loewner matrix $\mathbb{L}$ is invertible. Moreover, a minimal generalized state-space realization is then given by
$$ Z(s)=(W-DR)(\mathbb{L}_\sigma - LDR- s\mathbb{L})^{-1}(V-LD)+D$$
and the corresponding system matrix is given by
$$ \left[ \begin{array}{c|c}  A -s E &  B \\ \hline  C &  D \end{array}\right] = 
\left[ \begin{array}{c|c} \mathbb{L}_\sigma - s \mathbb{L} & V \\ \hline -W & 0 \end{array}\right]
+ 
\begin{bmatrix}- L \\ \hline I_m \end{bmatrix}  D \left[ \begin{array}{c|c} R & I_m \end{array}\right].$$

\end{theorem}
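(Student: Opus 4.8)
The plan is to verify the claimed realization directly by constructing, from the data, a generalized state-space model that both interpolates and has MacMillan degree $n$, and then invoking minimality plus the degrees-of-freedom count (or a standard uniqueness argument for degree-$n$ proper functions satisfying $2n$ tangential conditions plus $D$) to conclude it must coincide with $Z(s)$. First I would introduce a minimal realization $Z(s)=C(sI_n-A)^{-1}B+D$ and define the generalized tangential observability and controllability matrices $\mathcal{O}$ with rows $\ell_j C(\mu_j I_n-A)^{-1}$ and $\mathcal{C}$ with columns $(\lambda_j I_n-A)^{-1}B r_j$. The key algebraic identities are that $\mathbb{L}=-\mathcal{O}\mathcal{C}$ and $\mathbb{L}_\sigma=-\mathcal{O}A\mathcal{C}$; these follow from the resolvent identity $(\mu I-A)^{-1}-(\lambda I-A)^{-1}=(\lambda-\mu)(\mu I-A)^{-1}(\lambda I-A)^{-1}$ together with the interpolation relations $v_j=\ell_j Z(\mu_j)$, $w_j=Z(\lambda_j)r_j$, which let one rewrite the numerators $\ell_j w_k-v_j r_k$ and $\lambda_k\ell_j w_k-\mu_j v_j r_k$ in resolvent form. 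Similarly $V-LD=\mathcal{O}B$ and $W-DR=C\mathcal{C}$. Since $\mathbb{L}=-\mathcal{O}\mathcal{C}$ is invertible by hypothesis, both $\mathcal{O}$ and $\mathcal{C}$ have full rank $n$.

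Next I would substitute these factorizations into the candidate pencil. Using $\mathbb{L}_\sigma-LDR-s\mathbb{L}=-\mathcal{O}(A-sI_n)\mathcal{C}-LDR$ and the Sherman–Morrison–Woodbury-type manipulation, one shows
\[
(W-DR)(\mathbb{L}_\sigma-LDR-s\mathbb{L})^{-1}(V-LD)+D = C\mathcal{C}\bigl(-\mathcal{O}(A-sI_n)\mathcal{C}-LDR\bigr)^{-1}\mathcal{O}B+D,
\]
and then, because $\mathcal{O}$ and $\mathcal{C}$ are square invertible, this collapses after cancelling $\mathcal{O}$ and $\mathcal{C}$ to $C(sI_n-A-\text{(correction)})^{-1}B+D$; a short computation confirms the correction term vanishes, recovering $Z(s)$. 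Equivalently, one checks that the displayed system matrix
\[
\left[\begin{array}{c|c} \mathbb{L}_\sigma - s\mathbb{L} & V \\ \hline -W & 0\end{array}\right]+\begin{bmatrix}-L\\ \hline I_m\end{bmatrix}D\left[\begin{array}{c|c} R & I_m\end{array}\right]
\]
is obtained from $\left[\begin{smallmatrix} A-sI_n & B \\ C & D\end{smallmatrix}\right]$ by the equivalence transformation $\mathrm{diag}(\mathcal{O},I_m)\cdot(\cdot)\cdot\mathrm{diag}(\mathcal{C},I_m)$ followed by a block congruence/row-column operation absorbing $L$ and $R$; equivalence transformations preserve the transfer function. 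Minimality of the realization follows since $\mathbb{L}$ invertible forces the pencil $\mathbb{L}_\sigma-s\mathbb{L}$ to be regular of degree $n$ with no deficiency, and the Rosenbrock system matrix has full rank for all finite $s$.

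For uniqueness: any proper transfer function of MacMillan degree $n$ satisfying the $2n$ tangential interpolation conditions in \eqref{interpol} together with $Z(\infty)=D$ must agree with the constructed one, because the difference would be a rational function of degree at most $2n$ vanishing (tangentially) at $2n$ points and at infinity with the right null directions, and the Loewner pencil being nonsingular is exactly the condition that forces this difference to be zero — this is the content of the Mayo–Antoulas theory \cite{MaA07} specialized to the square-pencil case, so I would cite it rather than re-prove it. The main obstacle is bookkeeping in establishing the four factorization identities $\mathbb{L}=-\mathcal{O}\mathcal{C}$, $\mathbb{L}_\sigma=-\mathcal{O}A\mathcal{C}$, $V-LD=\mathcal{O}B$, $W-DR=C\mathcal{C}$ cleanly in the tangential (as opposed to matrix) setting, and then carefully tracking how the rank-$m$ update by $L$, $D$, $R$ interacts with the pencil inversion; once those are in place, the rest is routine linear algebra and an appeal to \cite{MaA07}.
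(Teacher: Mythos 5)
Your overall route is the standard Mayo--Antoulas factorization argument, which is in fact what the paper relies on: the paper does not reprove this theorem at all, it simply invokes \cite{MaA07} for the case $\mathbb{L}$ invertible. So reconstructing that argument via the tangential generalized observability matrix $\mathcal{O}$ (rows $\ell_j C(\mu_j I_n-A)^{-1}$) and controllability matrix $\mathcal{C}$ (columns $(\lambda_k I_n-A)^{-1}Br_k$), together with $V-LD=\mathcal{O}B$, $W-DR=C\mathcal{C}$ and $\mathbb{L}=-\mathcal{O}\mathcal{C}$, is exactly the right idea, and deferring the uniqueness statement to \cite{MaA07} is consistent with what the paper itself does.

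There is, however, a concrete error in your key identity: with a nonzero feed-through term $D$ (and here $v_j=\ell_j Z(\mu_j)$, $w_j=Z(\lambda_j)r_j$ include $D$), the shifted Loewner matrix is \emph{not} $-\mathcal{O}A\mathcal{C}$. Entrywise, writing $\lambda Z(\lambda)=CB+CA(\lambda I_n-A)^{-1}B+\lambda D$ and using the resolvent identity gives
\begin{equation*}
\mathbb{L}_\sigma \;=\; -\,\mathcal{O}A\mathcal{C} \;+\; LDR,
\end{equation*}
and this extra $LDR$ is precisely why the pencil in the theorem carries the correction $-LDR$: one gets $\mathbb{L}_\sigma-LDR-s\mathbb{L}=\mathcal{O}(sI_n-A)\mathcal{C}$ identically, so that
\begin{equation*}
(W-DR)(\mathbb{L}_\sigma-LDR-s\mathbb{L})^{-1}(V-LD)+D \;=\; C\mathcal{C}\,\mathcal{C}^{-1}(sI_n-A)^{-1}\mathcal{O}^{-1}\mathcal{O}B+D \;=\; Z(s),
\end{equation*}
with no Sherman--Morrison--Woodbury manipulation needed at all. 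Under your stated identity the pencil would instead equal $\mathcal{O}(sI_n-A)\mathcal{C}-LDR$, and the step ``a short computation confirms the correction term vanishes'' is false as written -- that leftover $LDR$ does not vanish and the reduction to $Z(s)$ fails. The fix is only bookkeeping, but it is the crux of why the realization formula has the form it does. Two smaller points: invertibility of $\mathbb{L}=-\mathcal{O}\mathcal{C}$ with both factors $n\times n$ indeed gives $\mathcal{O},\mathcal{C}$ invertible, which yields minimality directly (order $n$ equals the MacMillan degree); and uniqueness is obtained more cleanly than your degree-counting sketch by observing that \emph{any} proper degree-$n$ interpolant admits the same factorization and hence has transfer function equal to the one constructed from the data alone.
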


\medskip

Let us now apply this to the special case where the interpolation points are the so-called spectral zeros of $Z(s)$.

\begin{definition}
 Let $Z(s)$ be a real and strictly passive transfer function of MacMillan degree $n$ with associated spectral density function $\Phi(s):= Z^\mathsf{T}(-s) + Z(s)$. 
 Then the spectral zeros and zero directions of $Z(s)$ are the pairs $(s_j,r_j)$ such that
 $\Phi(s_j)r_j=0$. 
\end{definition}
When the zeros are distinct (which is generic), there are exactly $n$ zeros in the open right half-plane and $n$ zeros in the open left half-plane because the 
spectral density function $\Phi(s)$ has degree $2n$ and has no zeros on the imaginary axis. The definition of the spectral zeros implies
$$  \Phi(s_j)r_j=Z^\mathsf{T}(-s_j)r_j+Z(s_j)r_j=0 , 
$$
and hence
$$  w_j := Z(s_j)r_j \quad \Longleftrightarrow \quad Z^\mathsf{T}(-s_j)r_j=-w_j.
$$
Since the spectral zeros and zero directions $(s_j,r_j)$ form a self-conjugate set, we can distinguish two cases for these equations,
depending on the condition that $s_j$ is a real zero or not. In the real case, we have
$$ s_j\in \R : \quad  Z(s_j)r_j=w_j  \quad \Longleftrightarrow  \quad r_j^\mathsf{T}Z(-s_j)= -w_j^\mathsf{T},
$$
and in the complex case, we have
$$ s_j\notin \R : \quad
\left\{ \begin{array}{c} Z(s_j)r_j=w_j \quad \Longleftrightarrow  \quad r_j^\mathsf{H}Z(-\overline s_j)= -w_j^\mathsf{H}, \\
         Z(\overline s_j)\overline r_j=\overline w_j  \quad \Longleftrightarrow  \quad r_j^\mathsf{T}Z(-s_j)= -w_j^\mathsf{T}.
        \end{array} 
\right.
$$
Therefore, if we define $\lambda_j,\; j=1,\ldots,n$, to be the spectral zeros of $Z(s)$ in the open right half-plane, 
$$ \Re \lambda_j \ge 0, \quad Z(\lambda_j)r_j=w_j,  \; j=1,\ldots,n, $$
then the set of right tangential conditions $(\lambda_j,r_j,w_j)$ is self-conjugate. Moreover,
to every right tangential condition \linebreak 
$Z(\lambda_j)r_j=w_j$ (and its complex conjugate when $\lambda_j$ is complex), there is a corresponding left tangential condition
$$ r_j^\mathsf{H}Z(-\overline \lambda_j)= -w_j^\mathsf{H}, \; j=1,\ldots,n. $$
Therefore, we can define left tangential interpolation conditions $\ell_j Z(\mu_j)=v_j, \; j=1,\ldots ,n$ in
such a way that 
$$  M=-\overline \Lambda=-\Lambda^\mathsf{H}, \qquad L=  R^\mathsf{H}, \qquad V= -W^\mathsf{H}.
$$
Using these definitions, the Loewner and shifted Loewner matrices now become
\begin{subequations}\label{LLs}
	\begin{align}
	\mathbb{L} &:= \begin{bmatrix}
	\frac{r_1^\mathsf{H}w_1+w_1^\mathsf{H}r_1}{\lambda_1+\overline \lambda_1} & \ldots & \frac{r_1^\mathsf{H}w_n+w_1^\mathsf{H}r_n}{\lambda_n+\overline \lambda_1} \\
	\vdots & \ddots & \vdots \\ 
	\frac{r_n^\mathsf{H}w_1+w_n^\mathsf{H}r_1}{\lambda_1+\overline \lambda_n} & \ldots 
	& \frac{r_n^\mathsf{H}w_n+w_n^\mathsf{H}r_n}{\lambda_n+\overline \lambda_n} \end{bmatrix}\!,\\
 \mathbb{L}_\sigma &:= \begin{bmatrix}
\frac{\lambda_1r_1^\mathsf{H}w_1-\overline \lambda_1w_1^\mathsf{H}r_1}{\lambda_1+\overline \lambda_1} & 
\ldots & \frac{\lambda_nr_1^\mathsf{H}w_n-\overline \lambda_1w_1^\mathsf{H}r_n}{\lambda_n+\overline \lambda_1} \\
\vdots & \ddots & \vdots \\ 
\frac{\lambda_1r_n^\mathsf{H}w_1-\overline \lambda_nw_n^\mathsf{H}r_1}{\lambda_1+\overline \lambda_n} & \ldots &
\frac{\lambda_nr_n^\mathsf{H}w_n-\overline \lambda_nw_n^\mathsf{H}r_n}{\lambda_n+\overline \lambda_n}                       
\end{bmatrix}
	\end{align}
\end{subequations}
and they satisfy the equations 
\begin{equation} \label{Lyapsym}  \mathbb{L}\Lambda + \Lambda^\mathsf{H} \mathbb{L}= R^\mathsf{H}W+W^\mathsf{H}R , \quad
\mathbb{L}_\sigma\Lambda + \Lambda^\mathsf{H} \mathbb{L}_\sigma = R^\mathsf{H}W\Lambda-\Lambda^\mathsf{H}W^\mathsf{H}R. 
\end{equation} 
We point out that the matrix $\mathbb{L}$ is Hermitian by the construction itself, while $\mathbb{L}_\sigma$ is skew-Hermitian by construction.
For such symmetric conditions, the matrix $\mathbb{L}$ is also called the Pick matrix (see \cite{YoS67,morAnt05a}).   
It follows from Theorem \ref{LoewnerIdent} that a generalized state-space realization $\{A,B,C,D,E\}$ is given by
\begin{equation}
 \label{Loewner} 
\left[ \begin{array}{c|c}  A -s  E &  B \\ \hline  C &  D \end{array}\right] = \left[ \begin{array}{c|c} \mathbb{L}_\sigma - s \mathbb{L} & -W^\mathsf{H} \\ \hline -W & 0 \end{array}\right]
+ 
\begin{bmatrix}- R^\mathsf{H} \\ \hline I_m \end{bmatrix}  D \left[ \begin{array}{c|c} R & I_m \end{array}\right].
\end{equation}

Notice that the introduction of complex matrices and vectors in this section is in fact artificial. Since the interpolation conditions are self-conjugate, 
we can transform the construction as follows. Let $v:=v_r+\imath v_\imath$ be a complex vector associated with a complex interpolation point 
$\lambda := \alpha+\imath\beta$, then the unitary transformation 
$Q:=\frac{1}{\sqrt{2}}\begin{bmatrix}1 & -\imath \\ 1 & \imath \end{bmatrix}$ transforms pairs of complex conjugate data to real data, as can be seen below
$$ \begin{bmatrix} v & \overline v  \end{bmatrix} \cdot Q = \sqrt{2} \begin{bmatrix} v_r & v_\imath  \end{bmatrix}, \quad 
Q^\mathsf{H} \cdot \begin{bmatrix} \lambda & 0 \\ 0 & \overline \lambda \end{bmatrix} \cdot Q = \begin{bmatrix}  \alpha &  \beta \\ - \beta  & \alpha \end{bmatrix}.
$$ 
If the pairs of complex conjugate vectors and interpolation points have been permuted to be adjacent, then it suffices to
apply a block diagonal unitary similarity transformation $U$ with diagonal blocks $Q$ corresponding to each complex conjugate pair $(\lambda,\overline \lambda)$,
to transform the equations \cref{LLs,Lyapsym,RW} to real equations:
\begin{subequations} \label{RealLyap}  
\begin{align}
\mathbb{\widehat L} \Omega +\Omega^\mathsf{T}\mathbb{\widehat L}&= \widehat R^\mathsf{T}\widehat W+\widehat W^\mathsf{T}\widehat R , ~~\text{and}\\
 \mathbb{\widehat L}_\sigma\Omega + \Omega^\mathsf{T} \mathbb{\widehat L}_\sigma &= \widehat R^\mathsf{T}\widehat W \Omega-\Omega^\mathsf{T}\widehat W^\mathsf{T}\widehat R,
\end{align}
\end{subequations} 
where
$$ \widehat{\mathbb{L}}=U^\mathsf{H}\mathbb{L}U, \; \mathbb{\widehat L}_\sigma=U^\mathsf{H}\mathbb{L}_\sigma U, \; \Omega=U^\mathsf{H}\Lambda U, 
 \; \widehat W=W U,  \; \widehat R= R U, $$
and $\Omega$ is now block diagonal with $2\times 2$ blocks corresponding to each pair of complex conjugate interpolation points.
It then also follows from \eqref{Loewner} that a {\em real} generalized state-space realization $\{\widehat A,\widehat B,\widehat C, D,\widehat E\}$ is then given by
\begin{equation}
 \label{RealLoewner} 
\left[ \begin{array}{c|c} \widehat A -s\widehat E & \widehat B \\ \hline \widehat C & D \end{array}\right] = 
\left[ \begin{array}{c|c} \mathbb{\widehat L}_\sigma - s \mathbb{\widehat L} & -\widehat W^\mathsf{T} \\ \hline -\widehat W & 0 \end{array}\right] 
+ 
\begin{bmatrix}- \widehat R^\mathsf{T} \\ \hline I_m \end{bmatrix}  D \left[ \begin{array}{c|c} \widehat R & I_m \end{array}\right].
\end{equation}

Let us now look at the passivity condition we imposed on the transfer function $Z(s)$. The Loewner matrix $\mathbb{L}$ given in
\eqref{LLs} has the structure of a Pick matrix (see \eg \cite{YoS67}) since the spectral zeros used for the interpolation are assumed to be distinct.
The strict passivity of $Z(s)$ implies that this matrix is positive definite. It follows that $Z(\infty)=D$, and hence that
$D$ must be strictly positive-real as well. Since $\mathbb{L}$ is positive definite, so is $\mathbb{\widehat L}$ and we can factorize it as $\mathbb{\widehat L}=\Gamma^\mathsf{T}\Gamma$,
where $\Gamma$ is invertible, by using, for instance, the upper triangular Cholesky factor. Defining 
$$  \widehat W_\Gamma:=\widehat W \Gamma^{-1}, \quad \widehat R_\Gamma:=\widehat R \Gamma^{-1}, \quad \mathbb{\widehat L}_{\sigma \Gamma}:= \Gamma^{-\mathsf{T}}\mathbb{\widehat L}_{\sigma}\Gamma^{-1},
$$
we obtain an equivalent quadruple for the state-space realization 
$\{\widehat A_\Gamma,\widehat B_\Gamma,\widehat C_\Gamma, D\}=\{\Gamma^{-\mathsf{T}}\widehat A\Gamma^{-1},\Gamma^{-\mathsf{T}}\widehat B,\widehat C\Gamma^{-1}, D\}$ of $Z(s)$ as
\begin{equation}
 \label{Loewner2} 
\left[ \begin{array}{c|c} \widehat A_\Gamma & \widehat B_\Gamma \\ \hline \widehat C_\Gamma & D \end{array}\right] = 
\left[ \begin{array}{c|c} \mathbb{\widehat L}_{\sigma \Gamma} & -\widehat W_\Gamma^\mathsf{T} \\ \hline -\widehat W_\Gamma & 0 \end{array}\right] +
\begin{bmatrix}- \widehat R_\Gamma^\mathsf{T} \\ \hline I_m \end{bmatrix} D \left[ \begin{array}{c|c} \widehat R_\Gamma & I_m \end{array}\right].
\end{equation}
We then show  that this realization is in port-Hamiltonian form.
\begin{theorem}
 Let us construct an $m\times m$ real transfer function $Z(s)$ of MacMillan degree $n$ using the self-conjugate interpolation conditions as follows:
 $$ Z(\infty) =  D, \quad  Z(\lambda_j)r_j=w_j, \quad r^\mathsf{H}_j  Z(-\overline \lambda_j)=-w_j^\mathsf{H}, \quad j=1,\ldots, n, $$
where $\Re\left(\lambda_j\right) > 0$,  $D+D^\mathsf{T}\succ 0$ and $\mathbb{\widehat L}\succ 0 $ in which $\Re\left(\cdot\right)$ denotes the real part. Then, $Z(s)$ is strictly passive and the quadruple $\{\widehat A_\Gamma,\widehat B_\Gamma,\widehat C_\Gamma, D \}$ 
is in normalized port-Hamiltonian form and its spectral zeros and zero directions are given by $(\lambda_j,r_j),\; j=1,\ldots,n$.
\end{theorem}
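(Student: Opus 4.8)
The plan is to prove three things in turn: that the quadruple $\{\widehat A_\Gamma,\widehat B_\Gamma,\widehat C_\Gamma,D\}$ is a minimal realization of a well-defined $Z(s)$ and is in normalized port-Hamiltonian form; that each prescribed pair $(\lambda_j,r_j)$ is a spectral zero with zero direction; and that $Z(s)$ is strictly passive. The first is essentially a repackaging of the construction preceding the statement, the second is a one-line manipulation of the interpolation conditions, and the third is the substantive part, which I would extract from a degree count for the spectral density $\Phi$ of \eqref{Phi}.

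\emph{Realization and port-Hamiltonian structure.} Since $\widehat{\mathbb{L}}=U^{\mathsf H}\mathbb{L}U$ with $U$ unitary, $\widehat{\mathbb{L}}\succ0$ forces $\mathbb{L}\succ0$, so $\mathbb{L}$ is invertible and Theorem~\ref{LoewnerIdent} applies: the self-conjugate data define a real rational $Z(s)$ of MacMillan degree $n$, of which \eqref{Loewner} — and, after the unitary change of variables and the Cholesky scaling $\widehat{\mathbb{L}}=\Gamma^{\mathsf T}\Gamma$ (which replaces the descriptor matrix $\widehat{\mathbb{L}}$ by $\Gamma^{-\mathsf T}\widehat{\mathbb{L}}\Gamma^{-1}=I_n$), also \eqref{Loewner2} — is a minimal standard state-space realization. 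Writing $\S$ for the matrix built from $-\widehat A_\Gamma,\,-\widehat B_\Gamma,\,\widehat C_\Gamma,\,D$ as in the construction around \eqref{PH}, I would use that $\mathbb{L}_\sigma$ is skew-Hermitian by construction, hence $\widehat{\mathbb{L}}_{\sigma\Gamma}=\Gamma^{-\mathsf T}(U^{\mathsf H}\mathbb{L}_\sigma U)\Gamma^{-1}$ is real skew-symmetric, and read off from \eqref{Loewner2}, with $\M:=\{\widehat A_\Gamma,\widehat B_\Gamma,\widehat C_\Gamma,D\}$, that
\[
\S+\S^{\mathsf T}=W(I_n,\M)=\begin{bmatrix}\widehat R_\Gamma^{\mathsf T}\\ I_m\end{bmatrix}(D+D^{\mathsf T})\begin{bmatrix}\widehat R_\Gamma & I_m\end{bmatrix}\succeq0,
\]
using $D+D^{\mathsf T}\succ0$. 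Thus $\W:=\tfrac12(\S+\S^{\mathsf T})\succeq0$, $\V:=\tfrac12(\S-\S^{\mathsf T})$ is skew-symmetric, and with $Q:=I_n$ the quadruple fits Definition~\ref{def:ph}; as $Q=I_n$, this realization is in normalized port-Hamiltonian form. In passing, the KYP Theorem with certificate $X=I_n$ shows that $Z(s)$ is passive, so all poles of $Z$ lie in the closed left half-plane.

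\emph{Spectral zeros and the degree count.} Since $Z$ has real coefficients, taking conjugate transposes in $r_j^{\mathsf H}Z(-\overline\lambda_j)=-w_j^{\mathsf H}$ gives $Z^{\mathsf T}(-\lambda_j)r_j=-w_j$; together with $Z(\lambda_j)r_j=w_j$ this yields $\Phi(\lambda_j)r_j=Z^{\mathsf T}(-\lambda_j)r_j+Z(\lambda_j)r_j=0$, so $(\lambda_j,r_j)$ is a spectral zero with zero direction, and by $\Phi(-s)=\Phi(s)^{\mathsf T}$ each $-\lambda_j$ is a zero of $\det\Phi$ as well. The $n$ points $\lambda_j$ being distinct with $\Re\lambda_j>0$, the $2n$ numbers $\{\lambda_j\}\cup\{-\lambda_j\}$ are distinct and are regular points of $\Phi$ (the self-conjugate data force $Z(\pm\lambda_j)$ to be finite), hence distinct transmission zeros of the square rational matrix $\Phi$; since $\Phi=Z^{\mathsf T}(-\cdot)+Z$ has MacMillan degree at most $2n$, it follows that $\deg\Phi=2n$ exactly and that $\{\lambda_j\}\cup\{-\lambda_j\}$ are precisely its zeros, all off the imaginary axis. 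In particular $(\lambda_j,r_j)$, $j=1,\dots,n$, are exactly the spectral zeros and zero directions in the open right half-plane.

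\emph{Asymptotic stability and strict passivity.} Because $Z$ is passive it is stable, so any imaginary-axis pole of $Z$ is simple and, by the classical residue condition for positive-real functions, has Hermitian positive-semidefinite residue; the residue of $Z^{\mathsf T}(-\cdot)$ at that point is the negative of its conjugate transpose, so the two cancel in $\Phi$ and $\Phi$ has no imaginary poles. The poles of $\Phi$ in the open left half-plane coincide with those of $Z$ there (those of $Z^{\mathsf T}(-\cdot)$ lie in the closed right half-plane), and by the symmetry $\Phi(-s)=\Phi(s)^{\mathsf T}$ together with $\deg\Phi=2n$ there are exactly $n$ of them; hence all $n$ poles of $Z$ are in the open left half-plane, i.e.\ $Z$ is asymptotically stable. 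Finally $\omega\mapsto\Phi(\imath\omega)$ is then defined, continuous and Hermitian on all of $\mathbb{R}$ and nonsingular (no spectral zero on the imaginary axis), and $\lim_{|\omega|\to\infty}\Phi(\imath\omega)=D+D^{\mathsf T}\succ0$; since inertia is locally constant on the nonsingular Hermitian matrices and the image of this continuous path is connected and meets the positive-definite cone, $\Phi(\imath\omega)\succ0$ for all $\omega$, so $Z$ is strictly positive-real, hence strictly passive. The main obstacle is precisely this last paragraph: ``every spectral zero lies off the imaginary axis'' does not by itself give asymptotic stability (for instance $Z(s)=1+s^{-1}$ is strictly positive-real but not asymptotically stable), and one genuinely needs the count $\deg\Phi=2n$ — which rests on having $n$ distinct open-right-half-plane spectral zeros available for the interpolation and on the residue cancellation that prevents an imaginary pole of $Z$ from surviving in $\Phi$.
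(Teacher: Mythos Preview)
Your argument follows the paper's architecture---verify the normalized port-Hamiltonian structure from \eqref{Loewner2}, identify the $(\lambda_j,r_j)$ as spectral zeros, and invoke a degree bound on $\Phi$---but you make two parts substantially more explicit. First, where the paper imports the classical Pick/Nevanlinna criterion (positive-definite Pick matrix $\Rightarrow$ passive interpolant) as an external fact, you obtain passivity internally from the KYP inequality $W(I_n,\M)=\S+\S^{\mathsf T}=2\W\succeq 0$ once the port-Hamiltonian form is in hand; this is more self-contained and is really the natural route given the construction. Second, the paper compresses the passage from ``these are the only zeros of $\Phi$'' to ``$Z$ is strictly passive'' into a single clause, whereas you supply the missing pieces: the residue-cancellation argument excluding imaginary poles of $\Phi$, the symmetric pole count that forces all $n$ poles of $Z$ into the open left half-plane, and the inertia-continuity argument giving $\Phi(\imath\omega)\succ0$. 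Your final paragraph is thus the real content beyond what the paper writes, and your caveat that one genuinely needs $\deg\Phi=2n$ (and not merely ``no spectral zero on $\imath\mathbb R$'') is well placed.

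One small technical point: the parenthetical ``the self-conjugate data force $Z(\pm\lambda_j)$ to be finite'' is not literally justified by \emph{tangential} conditions, which only guarantee finiteness of $Z(\lambda_j)r_j$ and $r_j^{\mathsf H}Z(-\overline\lambda_j)$; a priori $-\lambda_j$ could still be a pole of $Z$. The clean fix is to count invariant zeros of the $2n$-dimensional Hamiltonian pencil associated with $\Phi$ (which has exactly $2n$ finite eigenvalues because $D+D^{\mathsf T}\succ0$) rather than transmission zeros of $\Phi$ as a rational matrix; your interpolation identities translate directly into a rank drop of that pencil at each $\lambda_j$, and the subsequent pole count and inertia argument go through unchanged.
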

\begin{proof}
 A necessary condition for strict passivity is that the Hermitian part of $Z(s)$ is positive definite on the whole imaginary axis, including infinity,  and since $D=Z(\infty)$ and is a real matrix, we must have $D+D^\mathsf{T}\succ 0$. A necessary and sufficient condition for the passivity  of $Z(s)$ with given interpolation data, is that the Loewner matrix $\mathbb{\widehat L}$ is positive semi-definite, but since we assume $\mathbb{\widehat L}\succ 0$,  the transfer function is passive. Let us now decompose the real matrix $D$ as $D =  N +  S$, where $S$ is the symmetric part of $D$ and $N$ is its skew-symmetric part. Then, following the discussion of Section \ref{sec:PH}, we obtain
 $$ \W =  \W^\mathsf{T} =
\left[\begin{array}{c} \widehat R_\Gamma^\mathsf{T} \\ \hline I_m \end{array}\right]  S \left[ \begin{array}{c|c} \widehat R_\Gamma & I_m \end{array}\right] \succeq 0,$$
$$ \V = - \V^\mathsf{T} = \left[ \begin{array}{c|c} - \mathbb{\widehat L}_{\sigma \Gamma} & \widehat W_\Gamma^\mathsf{T} \\ \hline -\widehat W_\Gamma & 0 \end{array}\right] \! + \!
\left[\begin{array}{c} \widehat R_\Gamma^\mathsf{T} \\ \hline I_m \end{array}\right]  N \left[ \begin{array}{c|c} \widehat R_\Gamma & I_m \end{array}\right]
 $$
 which are the conditions for the passivity of a normalized port-Hamiltonian system.  The standard state-space realization \eqref{Loewner2} 
 is therefore normalized port-Hamiltonian. It follows from the self-conjugate conditions that 
 $$   \Phi^\mathsf{T}(-\lambda_j)r_j=\Phi(\lambda_j)r_j=Z(-\lambda_j)r_j+Z(\lambda_j)r_j=-w_j+w_j=0, 
 $$
 for $j=1,\ldots,n,$ and since $\Phi(s)$ has MacMillan degree bounded by $2n$, these are the only zeros of $\Phi(s)$, which also implies that $Z(s)$ is then strictly passive.
\end{proof}

\begin{remark}
The conditions that the spectral zeros should be simple can be removed. The construction of the Loewner matrix $\mathbb{L}$ and of the shifted Loewner matrix $\mathbb{L}_\sigma$
then have to be adapted, as explained in \cite{morAnt05a,MaA07}, but the properties of these matrices are preserved.
The tangential interpolation conditions then also involve tangential conditions on the derivatives of $Z(s)$ at the spectral zeros $\lambda_i$, but the conclusions remain the same.
\end{remark}

\begin{remark}
The conditions that we should know the zero directions of the corresponding spectral zeros of $Z(s)$ form a demanding constraint. 
But this is different in the scalar case since we only need to impose a scalar condition $Z(-\lambda_j)+Z(\lambda_j)$ in each spectral zero. 
We can then choose $R=\begin{bmatrix} 1, \ldots, 1 \end{bmatrix}$ which implies that $W= \begin{bmatrix} Z(\lambda_1), \ldots, Z(\lambda_n) \end{bmatrix}$.
\end{remark}

Finally, we summarize the construction of a port-Hamiltonian realization in the normalized form in \Cref{Algo:Construct_pH_realization}.
\begin{algorithm}[tb]
	\caption{Construction of a port-Hamiltonian realization in a normalized form.}\label{Algo:Construct_pH_realization}
	\textbf{Input:} \vspace{-0.25cm}
	\begin{itemize} 
	  \setlength{\itemsep}{1pt}
	  \setlength{\parskip}{0pt}
	  \setlength{\parsep}{0pt}
	  \item Spectral zeros $\lambda_j$ and zero directions $r_j$, $j = 1,\ldots,n$, 
	  \item transfer function measurements, i.e. $w_j = Z(\lambda_j)r_j$, where $Z(s)$ denotes the transfer function, 
	  \item the feed-through term $D$. 
	\end{itemize}\vspace{-0.25cm}
	\begin{algorithmic}[1]
		\State Construct the Loewner and shifted Loewner matrices using $w_j$ and $r_j$ as shown in \eqref{LLs}.
		\State Define $W := \begin{bmatrix}w_1,\ldots,w_n \end{bmatrix}$ and $R := \begin{bmatrix}r_1,\ldots,r_n \end{bmatrix}$.
		\State Construct the interpolating realization, ensuring the matching of the transfer function at infinity is:
		\Statex $E =\mathbb L$, $A = \mathbb L_s - R^{\mathsf{H}}DR$, $B = -W^{\mathsf{H}} - R^{\mathsf{H}}D$, $C = -W + DR$.
		\State Perform the unitary transformation to obtain a real realization $(\widehat E,\widehat A,\widehat B,\widehat C)$.
		\State Consider the Choleskey factorization of $\widehat E:= \Gamma^{\mathsf{T}}\Gamma$. 
		\State Construct a port-Hamiltonian realization in the normalized form as follows:
		\Statex $\tA = \Gamma^{-\mathsf{T}}\widehat A\Gamma^{-1}$,\quad $\tB = \Gamma^{-\mathsf{T}}\widehat B$, \quad $\tC = \widehat C\Gamma^{-1}$.
	\end{algorithmic}
	\textbf{Output:} A port-Hamiltonian realization:~$\left(\tA,\tB,\tC,D\right)$. 
\end{algorithm}

\section{Estimation of Spectral Zeros and Zero Directions using Data on the Imaginary Axis}
So far, we have discussed how to construct a port-Hamiltonian realization from the transfer function measurements on spectral zeros along with zero directions. However, it may be restrictive as in practice, it is almost impossible to know the spectral zeros and zero directions a priori. Moreover, even if the zeros are known, taking measurements at those points and directions is not straightforward. On the other hand, there are methods allowing us to obtain  measurements on the imaginary axis. Using these measurements, one can obtain a realization using the classical Loewner approach, proposed in \cite{MaA07}, which interpolates the data. However, it is very likely that it will not yield a realization in normalized port-Hamiltonian form. But we are interested in a passive realization given there is an underlying passive system. To do so, we first propose a strategy in Algorithm~\ref{Algo:EstimateSpectralZero} to estimate the spectral zeros and directions based on the data on the $j\omega$ axis. Once we have such a data set, we can obtain a passive realization directly using \Cref{Algo:Construct_pH_realization}.

\begin{algorithm}[tb]
	\caption{Estimation of spectral zeros and directions using the  data on the imaginary axis.}\label{Algo:EstimateSpectralZero}
	\begin{algorithmic}[1]
		\State Collect enough samples on the imaginary axis. 
		\State Construct a realization using the Loewner approach, ensuring the matching of the transfer function at infinity. 
		\State Determine spectral zeros of the obtained realization. 
	\end{algorithmic}
\end{algorithm}

The main motivation of proposing \Cref{Algo:EstimateSpectralZero} is as follows. As we know, if the transfer functions of two linear systems are the same, then there exists a state-space transformation, allowing us to go from one to another. Furthermore, it is also known that a minimal realization of a linear system can be obtained using the Loewner approach, assuming we have enough samples on the imaginary axis. Hence, if there exists a passive realization of the linear system, then such a passive realization can be determined using the realization obtained using the Loewner approach and a state-space transformation. However, a state-space transformation of a linear system does not change the spectral zeros and corresponding directions. Consequently, we can indeed directly use the realization obtained using the Loewner approach to estimate the spectral zeros and corresponding directions and further can evaluate the transfer function at spectral zeros and in the corresponding directions. 
 
 \begin{remark}
  One can also construct a reduced-order system as well by truncating singular values of the Loewner matrix at a desired tolerance. This can be followed by determining the spectral zeros and zero directions of the reduced-order system, which can be very different from the original ones; however, the spectral zeros and zero directions of the reduced-order system form a good representative, allowing us to compare the important dynamics of the original system. 
 \end{remark}
 \begin{remark}
  If the transfer function measurements are given in a particular frequency band, then applying \Cref{Algo:EstimateSpectralZero} would yield  spectral zeros and zero directions, corresponding to the considered frequency band. If a port-Hamiltonian realization in the normalized form is constructed using \Cref{Algo:Construct_pH_realization}, then we obtain the frequency-limited port-Hamiltonian realization. This is discussed and illustrated further in the subsequent section.  
 \end{remark}

\section{Illustrative Examples and Application in Model-Order Reduction}
In this section, we illustrate the proposed identification approach to construct a passive realization by means of several examples. All numerical simulations are carried out in MATLAB\textsuperscript{\textregistered} version 7.11.0.584 (R2016b) 64-bit on an \intel \coreiseven-6700 CPU @ 3.40GHz, 6MB cache, 8GB RAM, Ubuntu 16.04.6 LTS (x86-64). 
\subsection{An analytical example}
We first consider an analytical example, showing the necessary steps, precisely \Cref{Algo:Construct_pH_realization}, to identify an underlying passive realization whose transfer function is as follows:
$$ Z(s) := dI_2 - (sI_2 - A)^{-1} \quad \mathrm{with} \quad 
A:=\begin{bmatrix} a & b \\ -b & a \end{bmatrix}$$
and let us take $a=-1$, $b=1$, $d=2$ to make the system strictly passive since it is then port-Hamiltonian with positive definite matrix $\W$.
The poles of $Z(s)$ are the eigenvalues of $A$ and are equal to $-1\pm \imath$ and hence asymptotically stable. The spectral zeros are the zeros of
$\Phi(s)= Z^\mathsf{T}(-s)+Z(s)$ which can be obtained from
\begin{align*}
 Q\Phi(s)Q^\mathsf{H} &= QZ^\mathsf{T}(-s)Q^\mathsf{H}+QZ(s)Q^\mathsf{H} \\
 & = 2dI_2 - (-sI_2 - QA^\mathsf{T}Q^\mathsf{H})^{-1} - (sI_2 - QAQ^\mathsf{H})^{-1},
\end{align*}
where 
\begin{align*}
 Q &=\frac{1}{\sqrt{2}}\begin{bmatrix} 1 & -\imath \\ 1 & \imath \end{bmatrix}, 
 &QAQ^\mathsf{H}&= \begin{bmatrix} a+\imath b & 0 \\ 0 & a-\imath b \end{bmatrix}.
\end{align*}
It then turns out that both $QZ(s)Q^\mathsf{H}$ and $Q\Phi(s)Q^\mathsf{H}$ are diagonal and equal to 
\begin{align*}
QZ(s)Q^\mathsf{H} &= \diag{2-\frac{1}{s+1-\imath},2-\frac{1}{s+1+\imath}}, \\
Q\Phi(s)Q^\mathsf{H} &= \diag{\frac{6+8\imath s-4s^2}{2+2\imath s-s^2},\frac{6-8\imath s-4s^2}{2-2\imath s-s^2}}.
\end{align*}
The spectral zeros in the right half-plane are $\lambda=\frac{\sqrt{2}}{2}+ \imath$ and $\overline \lambda=\frac{\sqrt{2}}{2}-\imath$ and
the corresponding zero directions are 
$$  Q\Phi(\lambda) Q^\mathsf{H}  \begin{bmatrix} 1 \\ 0 \end{bmatrix}=0 \quad \Longleftrightarrow \quad \Phi(\lambda) \begin{bmatrix} 1 \\ \imath \end{bmatrix} /\sqrt{2} =0, 
$$
and
$$
Q\Phi\left(\overline \lambda\right) Q^\mathsf{H} \begin{bmatrix} 0 \\ 1 \end{bmatrix} =0 \quad \Longleftrightarrow \quad \Phi\left(\overline \lambda\right)\begin{bmatrix} 1 \\ -\imath \end{bmatrix}/\sqrt{2}=0.
$$
The interpolation conditions then are
\begin{align*}
  Z(\lambda) Q^\mathsf{H}\begin{bmatrix} 1 \\ 0 \end{bmatrix} &=
Z(\lambda) \begin{bmatrix} \frac{1}{\sqrt{2}} \\ \frac{\imath}{\sqrt{2}} \end{bmatrix} = W, ~~ \mathrm{and}~~
Z(\overline \lambda) \begin{bmatrix} \frac{1}{\sqrt{2}} \\ -\frac{\imath}{\sqrt{2}} \end{bmatrix} = \overline W,
\end{align*}
where $QZ(\lambda)Q^\mathsf{H}\begin{bmatrix} 1 \\ 0 \end{bmatrix}= 
\begin{bmatrix} 2-\frac{1}{\lambda -(a+\imath b)} \\ 0 \end{bmatrix} 
= \begin{bmatrix} \sqrt{2} \\ 0 \end{bmatrix}$ implies  
$$QW=\begin{bmatrix} \sqrt{2} \\ 0 \end{bmatrix},~~\text{and}~~W=\begin{bmatrix} 1 \\ \imath \end{bmatrix}.$$

The Loewner matrix then is obtained from $R=Q^\mathsf{H}$, $W=\sqrt{2}Q^\mathsf{H}$ and hence $W^\mathsf{H}R=\sqrt{2}I_2$, which finally yields 
\begin{align*}
 \mathbb{L}&=\frac{2\sqrt{2}}{\lambda+\overline \lambda}I_2= 2I_2,
 \quad \mathrm{}\\
 \mathbb{L}_\sigma &=
 \begin{bmatrix} \sqrt{2}\frac{\lambda-\overline \lambda}{\lambda+\overline \lambda}  & 0 \\ 0 & -\sqrt{2}\frac{\lambda-\overline \lambda}{\lambda+\overline \lambda}  \end{bmatrix}
 =2 \begin{bmatrix} \imath  & 0 \\ 0 & -\imath  \end{bmatrix},
 \end{align*}
 and the generalized state space realization \eqref{Loewner} becomes
 $$  
\left[ \begin{array}{c|c}  A -s E &  B \\ \hline  C &  D \end{array}\right] = 
\left[ \begin{array}{c|c} \mathbb{L}_{\sigma} - s  \mathbb{L}  & -\sqrt{2}Q \\ \hline -\sqrt{2}Q^\mathsf{H} & 0 \end{array}\right] +
 2 \left[ \begin{array}{c}- Q \\ \hline I_2 \end{array}\right] \left[ \begin{array}{c|c} Q^\mathsf{H} & I_2 \end{array}\right].
 $$
 Using the factorization $\mathbb{L}=\Gamma^\mathsf{H}\Gamma$ with $\Gamma:=\sqrt{2}Q^\mathsf{H}$, we get $\mathbb{L}_{\sigma \Gamma}=\frac12 Q^\mathsf{H}\mathbb{L}_{\sigma}Q$ and
 \begin{align*}
\left[ \begin{array}{c|c}  A_\Gamma  &  B_\Gamma \\ \hline  C_\Gamma &  D \end{array}\right] &= 
\left[ \begin{array}{c|c} \mathbb{L}_{\sigma \Gamma}  & - I_2 \\ \hline - I_2 & 0 \end{array}\right] +
 2\left[\begin{array}{c} - I_2/\sqrt{2} \\ \hline I_2 \end{array}\right] \left[ \begin{array}{c|c} I_2/\sqrt{2} & I_2 \end{array}\right]\\
 &=\left[ \begin{array}{cc|cc}  -1  & 1 &  -c & 0  \\ -1 & -1 & 0 & -c \\ \hline  1/c  &  0 & 2 & 0 \\  0 & 1/c & 0 & 2 \end{array}\right] 
\end{align*}
 with $c=1+\sqrt{2}$ and $1/c=\sqrt{2}-1$. The smallest eigenvalue $\lambda_{\min}{\left(\mathcal W_\Gamma\right)}$ of the above model is $0$ which is a poor estimate of its passivity radius.
 But we can apply to this model a similarity scaling with $T=cI_2$, which yields a model $\M_T$ where $A_T=A_\Gamma$ and $D_T=D$ are unchanged but $C_T=-B_T=I_2$. This
 corresponds to using \cite[Lemma 3.2]{MeV19} with the certificate $X=c^{-2}I$, and transforming the model to a new port-Hamiltonian 
 form  which has a passivity radius equal to $\lambda_{\min}{\left(\mathcal W_T\right)}=\frac12(3-\sqrt{5})\approx 0.382$.

\subsection{Electric RLC circuit}
As second example, we discuss the electrical circuit example considered in \cite{morAnt05a}. The system dynamics in the state-space form is given by as follows:
\begin{equation*}
\begin{aligned}
\dot x(t) &= A x(t) + B u(t),\\
y(t)&= C x(t) + Du(t),
\end{aligned}
\end{equation*}
where 
\begin{equation*}
A = \begin{bmatrix} -20 & -10& 0 & 0 & 0 \\ 10 &0 & -10 & 0 & 0 \\ 0&10 &0 & -10 & 0\\  0&0&10 &0 & -10  \\  0&0&0 &10 & -2 \end{bmatrix},~ B = \begin{bmatrix} 20\\0\\0\\0\\0 \end{bmatrix}, ~~ C =\begin{bmatrix}-2\\0\\0\\0\\0\end{bmatrix}^{\mathsf{T}},
\end{equation*}
and $D = 2$.

To identify the dynamics, we assume to have  $20$ points on the imaginary axis in a log-scale between $[10^{-1},10^3]$. We first employ the Loewner approach \cite{MaA07} to obtain a realization. In \Cref{fig:Ant_RLC_SVD}, we plot the singular values of the Loewner matrix, which allows us to determine the order of a minimal realization. We observe that the singular values after the $5$th are at the level of machine precision as one would expect. Hence, we determine a realization of order $5$. Next, we show the spectral zeros of the original and Loewner model in \Cref{fig:Antu_RLC_SpectralZero}, indicating that the spectral zeros of  both models are the same as expected. 

\begin{figure}[!tb]
	\centering
	\setlength\fheight{4.0cm}
	\setlength\fwidth{.4\textwidth}
	{\small	%
%
%
\definecolor{mycolor1}{rgb}{0.00000,0.44700,0.74100}%
\begin{tikzpicture}

\begin{axis}[%
width=\fwidth,
height=\fheight,
scale only axis,
xmin=0,
xmax=20,
ymode=log,
ymin=1e-20,
ymax=100000,
yminorticks=true,
grid = major,
ylabel = {$\sigma_k$},
xlabel = {$k$},
axis background/.style={fill=white}
]
\addplot [color=mycolor1, line width=2.0pt, forget plot]
  table[row sep=crcr]{%
1	2.25389268364263\\
2	0.566349658731192\\
3	0.369583691755613\\
4	0.06338898528689\\
5	0.0525023643049368\\
6	4.07624988621042e-15\\
7	3.58065313716367e-15\\
8	1.52471865047902e-15\\
9	1.25877636361754e-15\\
10	1.10333989052588e-15\\
11	3.43258553362484e-16\\
12	1.7975446204801e-16\\
13	1.23694599284471e-17\\
14	4.2050843378428e-18\\
15	2.40419274815928e-18\\
16	1.19019356772664e-18\\
17	1.18163493301964e-18\\
18	6.79869144243949e-19\\
19	2.82998405270147e-19\\
20	1.41028710198163e-20\\
};
\end{axis}
\end{tikzpicture}
 }
	\caption{RLC example: The decay of the singular values of the Loewner matrix.}
	\label{fig:Ant_RLC_SVD}
\end{figure}
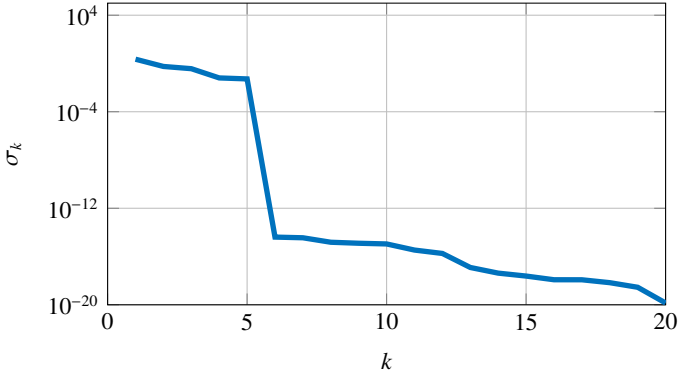

\begin{figure}[!tb]
    \centering
    \hspace{1.5cm} 
    \begin{tikzpicture}
      \begin{customlegend}[legend columns=2, legend style={/tikz/every even column/.append style={column sep=0.50cm}} , legend entries={~Orig. model, ~Loewner model}, ]
	\addlegendimage{color=blue,only marks,mark = +, mark size = 4,line width = 1.2pt}
	\addlegendimage{color=magenta,only marks, mark = o, mark size = 4,line width = 1.2pt}
      \end{customlegend}
    \end{tikzpicture}
    \setlength\fheight{4.0cm}
    \setlength\fwidth{5.0cm}
    {\small%
%
%
\begin{tikzpicture}

\begin{axis}[%
width=\fwidth,
height=\fheight,
at={(0\fwidth,0\fheight)},
scale only axis,
xmin=-3,
xmax=3,
xlabel style={font=\color{white!15!black}},
xlabel={Real part},
ymin=-20,
ymax=20,
ylabel style={font=\color{white!15!black}},
ylabel={Imag part},
axis background/.style={fill=white},
title style={font=\bfseries},
legend style={legend cell align=left, align=left, draw=white!15!black, at = {(1,1.2)}}
]
\addplot [color=blue, draw=none, mark=+, only marks,mark options={solid, blue}, mark size = 5,line width = 1.2pt]
  table[row sep=crcr]{%
-2.11289858183925	0\\
-1.59259838538348	10.07255612263\\
-1.59259838538348	-10.07255612263\\
-0.536178929334537	17.366624335962\\
-0.536178929334537	-17.366624335962\\
0.536178929334553	17.366624335962\\
0.536178929334553	-17.366624335962\\
1.59259838538347	10.07255612263\\
1.59259838538347	-10.07255612263\\
2.11289858183926	0\\
};

\addplot [color=magenta, draw=none, mark=o, mark options={solid, magenta},only marks, mark size = 5,line width = 1.2pt]
  table[row sep=crcr]{%
-2.11289858183923	0\\
-1.59259838538347	10.07255612263\\
-1.59259838538347	-10.07255612263\\
-0.536178929334554	17.366624335962\\
-0.536178929334554	-17.366624335962\\
0.536178929334539	17.366624335962\\
0.536178929334539	-17.366624335962\\
1.59259838538349	10.07255612263\\
1.59259838538349	-10.07255612263\\
2.11289858183923	0\\
};

\end{axis}

\end{tikzpicture}
 }
    \caption{RLC example: Spectral zero of the original and Loewner model.}
    \label{fig:Antu_RLC_SpectralZero}
\end{figure}
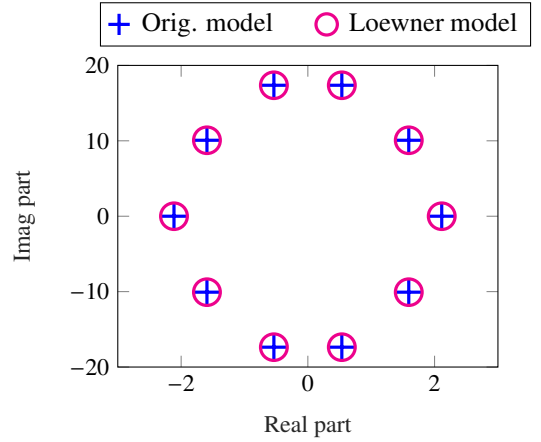

It is not in the form of a passive port-Hamiltonian system. But we can use the spectral zeros and zero directions of the Loewner model, which in this case, are the same as the original system, and estimate the transfer function at the spectral zeros along with the respective zero directions. Consequently, we  apply \Cref{Algo:Construct_pH_realization} to obtain a realization in the generalized state-space form of a port-Hamiltonian system \eqref{gpH} where 

\begin{align*}
 Q^{-1} &= \begin{bmatrix}  
    0.8795  &  0.0263  & -0.0304 &  -0.0511 &   0.0938\\
    0.0263  &  0.8515  & -0.0770 &  -0.1574 &  -0.0098\\
   -0.0304  & -0.0770  &  0.2545 &   0.0814 &   0.1136\\
   -0.0511  & -0.1574  &  0.0814 &   0.3560 &   0.0400\\
    0.0938  & -0.0098  &  0.1136 &   0.0400 &   0.2891\end{bmatrix},\\\displaybreak[3]
J &= \begin{bmatrix}   
         0 & -15.2595 &   0.5921 &   1.7823 &   0.5344\\
   15.2595 &        0 &  -0.4864 &  -0.8033 &   1.6342\\
   -0.5921 &   0.4864 &        0 &   0.5204 &  -0.5325\\
   -1.7823 &   0.8033 &  -0.5204 &        0 &  -3.3854\\
   -0.5344 &  -1.6342 &   0.5325 &   3.3854 &        0
   \end{bmatrix},\\
R &= \begin{bmatrix}    
    4.0000  &  0.0000 &  -2.8284 &  -3.9606 &   0.5598\\
    0.0000  &       0 &  -0.0000 &  -0.0000 &   0.0000\\
   -2.8284  & -0.0000 &   2.0000 &   2.8006 &  -0.3959\\
   -3.9606  & -0.0000 &   2.8006 &   3.9216 &  -0.5543\\
    0.5598  &  0.0000 &  -0.3959 & -0.5543  &  0.0784
\end{bmatrix},\\  
G & = \begin{bmatrix}-0.6563  &  0.3238 &   0.5378 &   0.6924 &  -0.2925\end{bmatrix}^{\mathsf{T}},\\
P & = \begin{bmatrix} 2.8284  &  0.0000 &  -2.0000 &  -2.8006 &   0.3959 \end{bmatrix}^{\mathsf{T}},\\
N & = 0,\quad S  = 2.
\end{align*}

Furthermore, we compare the Bode plots of the original, Loewner, and port-Hamiltonian model \eqref{fig:RLC_TF_Loew}, illustrating that all three models have the same transfer functions, and also have the same spectral zeros and zero directions. 

\begin{figure}[!tb]
    \centering
    \hspace{.1cm}
    \begin{tikzpicture}
	\begin{customlegend}[legend columns=2, legend style={/tikz/every even column/.append style={column sep=0.0cm}} , legend entries={Orig. model, Loewner model, port-Hamiltonian model}, ]
	\addlegendimage{color=blue, line width = 1.2pt}
	\addlegendimage{color=magenta,dashed,mark options={solid, magenta}, mark = +, mark size = 3, line width = 1.2pt}
	\addlegendimage{color=cyan,dashdotted,mark options={solid, cyan}, mark = o, mark size = 3, line width = 1.2pt}
	\end{customlegend}
    \end{tikzpicture}
    \setlength\fheight{5.0cm}
    \setlength\fwidth{.4\textwidth}
    {\small  %
	\input{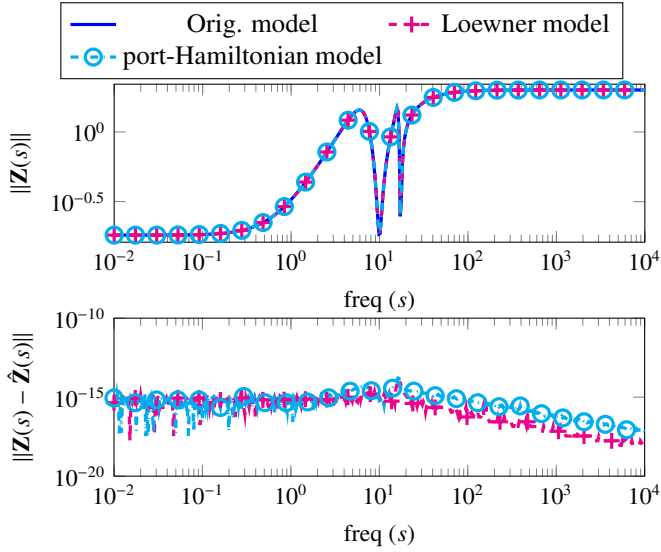}%
  }
    \caption{RLC example: Comparison of the Bode plots of the original, Loewner and port-Hamiltonian models.}
    \label{fig:RLC_TF_Loew}
\end{figure}

\subsection{A large scale electrical circuit}
Next, we consider a large scale RLC circuit, where $100$ electrical capacitances, inductors, and resistances are interconnected. For more details on the circuit topology, we refer to \cite{morGugA03}. The modeling of such a circuit leads to a model of order $n = 200$. Next, we assume that we obtain $200$ points on the imaginary axis on a log-scale within the range $\begin{bmatrix}10^{-1},10^3 \end{bmatrix}$. 

\begin{figure}[b]
	\centering
	\setlength\fheight{4.0cm}
	\setlength\fwidth{.4\textwidth}
	{\small%
	\input{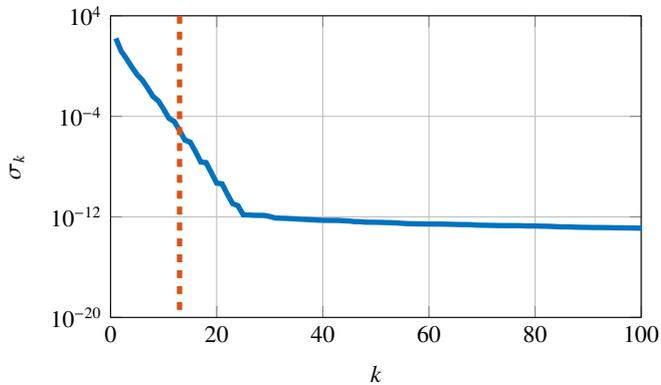}%
	}
	\caption{Large-scale RLC circuit: The decay of the singular values of the Loewner matrix.}
	\label{fig:LargescaleRLC_SVD}
\end{figure}

Towards constructing a port-Hamiltonian reduced-order system using the data, we first determine a realization using the classical Loewner method. We plot the decay of the singular values of the Loewner matrix in \Cref{fig:LargescaleRLC_SVD}, indicating a sharp decay. Having truncated singular values at $10^{-8}$   (relatively), we construct a reduced-order (Loewner) model of order $r = 13$, which is expected to capture the dynamics very well. Next, we compare the spectral zeros of the original and Loewner models in \Cref{fig:largescaleRLC_SpectralZero}. It is interesting to see that spectral zeros of both models are very different. Somehow, one can think of representative spectral zeros of the original systems with a smaller number, yet capturing the dynamics of the original systems very accurately.

\begin{figure}[!tb]
	\centering
    \hspace{1.5cm}
    \begin{tikzpicture}
	\begin{customlegend}[legend columns=2, legend style={/tikz/every even column/.append style={column sep=0.5cm}} , legend entries={~Orig. model, ~Loewner model}, ]
	\addlegendimage{color=blue,dashed,mark options={solid, blue}, mark = asterisk, mark size = 3, line width = 1.2pt, only marks}
	\addlegendimage{color=magenta,dashed,mark options={solid, magenta}, mark = o, mark size = 3, line width = 1.2pt,  only marks}
	\end{customlegend}
    \end{tikzpicture}
	\setlength\fheight{4.0cm}
	\setlength\fwidth{5.0cm}
	{\small		%
	\input{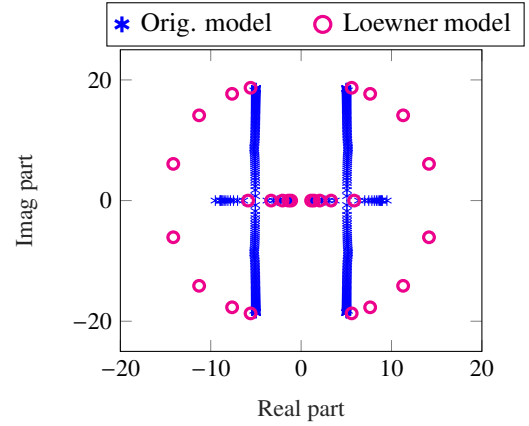}%
}
	\caption{A large scale RLC example: The decay of the singular values of the Loewner matrix.}
	\label{fig:largescaleRLC_SpectralZero}
\end{figure}
Subsequently, we determine the spectral zeros and zero directions using the Loewner model and evaluate the transfer function of the Loewner model at these zeros along with the respective directions. Then, we can determine a port-Hamiltonian realization using \Cref{Algo:Construct_pH_realization}. To compare the quality of models, we plot the Bode plots of the original, Loewner, and port-Hamiltonian models in \Cref{fig:largescaleRLC_TF_Loew}, showing the Loewner and port-Hamiltonian models approximate the original model very well. 

\begin{figure}[!tb]
	\centering
	    \hspace{.1cm}
    \begin{tikzpicture}
	\begin{customlegend}[legend columns=2, legend style={/tikz/every even column/.append style={column sep=0.0cm}} , legend entries={Orig. model, Loewner model, port-Hamiltonian model}, ]
	\addlegendimage{color=black, line width = 1.2pt}
	\addlegendimage{color=magenta,dashed,mark options={solid, magenta}, mark = +, mark size = 3, line width = 1.2pt}
	\addlegendimage{color=cyan,dashdotted,mark options={solid, cyan}, mark = o, mark size = 3, line width = 1.2pt}
	\end{customlegend}
    \end{tikzpicture}
	\setlength\fheight{5.0cm}
	\setlength\fwidth{.4\textwidth}
	{\small
		%
	\input{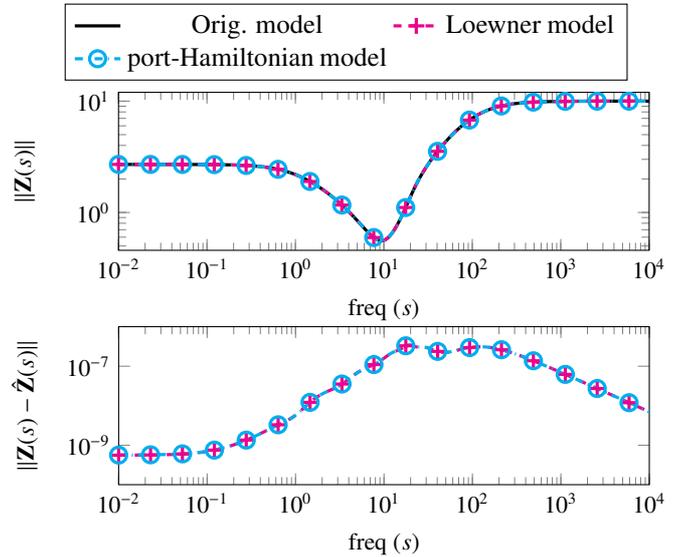}%

	}
	\caption{A large-scale RLC example: Comparison of the Bode plots of the original and Loewner model.}
	\label{fig:largescaleRLC_TF_Loew}
\end{figure}

\subsection{Frequency-limited port-Hamiltonian realization}
Lastly, we discuss the construction of a frequency-limited port-Hamiltonian realization using the same example as in the previous subsection. This means that the transfer function of the  port-Hamiltonian realization is required to very accurate in a given frequency band. Let us assume that we are given measurements in a frequency band $\begin{bmatrix} 5,15\end{bmatrix}$. As the first, we construct a Loewner model. This is followed by determining a reduced-order model of order $r = 9$. Next, we compare the spectral zeros of original and Loewner models in \Cref{fig:RLC_SpectralZeroFL}. It can be observed that the spectral zeros not only different from the original ones but also from those of the reduced-order model of order $r = 13$ in the previous example, see \Cref{fig:largescaleRLC_SpectralZero}. 
\begin{figure}[!tb]
	\centering
    \hspace{1.0cm}
    \begin{tikzpicture}
	\begin{customlegend}[legend columns=-1, legend style={/tikz/every even column/.append style={column sep=0.5cm}} , legend entries={~Orig. model, ~Loewner model (freq limit)},]
	\addlegendimage{color=blue,dashed,mark options={solid, blue}, mark = asterisk, mark size = 4, line width = 1.2pt, only marks}
	\addlegendimage{color=magenta,dashed,mark options={solid, magenta}, mark = o, mark size = 4, line width = 1.2pt,  only marks}
	\end{customlegend}
    \end{tikzpicture}
	\setlength\fheight{4.0cm}
	\setlength\fwidth{5.0cm}
	{\small		%
	\input{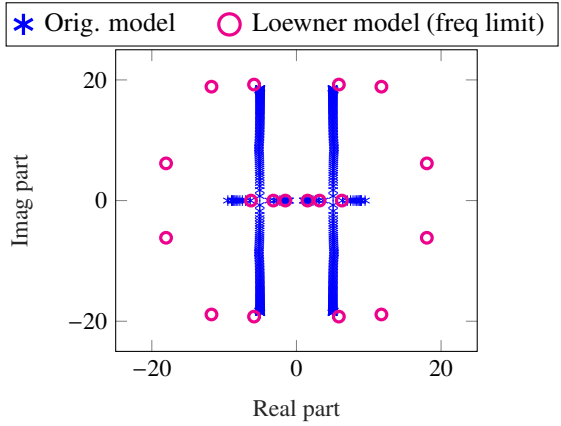}%
}
	\caption{Frequency limited RLC circuit: Comparison of spectral zeros of the original and Loewner model.}
	\label{fig:RLC_SpectralZeroFL}
\end{figure}

Next, we plot the transfer functions of the original and the identified port-Hamiltonian realization in \Cref{fig:RLC_TF_LoewFL}. Comparing, in particular, the error plots in \Cref{fig:largescaleRLC_TF_Loew,fig:RLC_TF_LoewFL}, we observe that the identified port-Hamiltonian realization using the data in the frequency band is much more accurate in the considered frequency band (nearly by three orders of magnitude) than the model identified in the previous subsection, and moreover, it is of a lower dimension.

\begin{figure}[!tb]
	\centering
	    \hspace{.1cm}
    \begin{tikzpicture}
	\begin{customlegend}[legend columns=2, legend style={/tikz/every even column/.append style={column sep=0.0cm}} , legend entries={Orig. model, Loewner model, port-Hamiltonian model}, ]
	\addlegendimage{color=black, line width = 1.2pt}
	\addlegendimage{color=magenta,dashed,mark options={solid, magenta}, mark = +, mark size = 4, line width = 1.2pt}
	\addlegendimage{color=cyan,dashdotted,mark options={solid, cyan}, mark = o, mark size =4, line width = 1.2pt}
	\end{customlegend}
    \end{tikzpicture}
	\setlength\fheight{5.0cm}
	\setlength\fwidth{.4\textwidth}
	{\small
		%
	\input{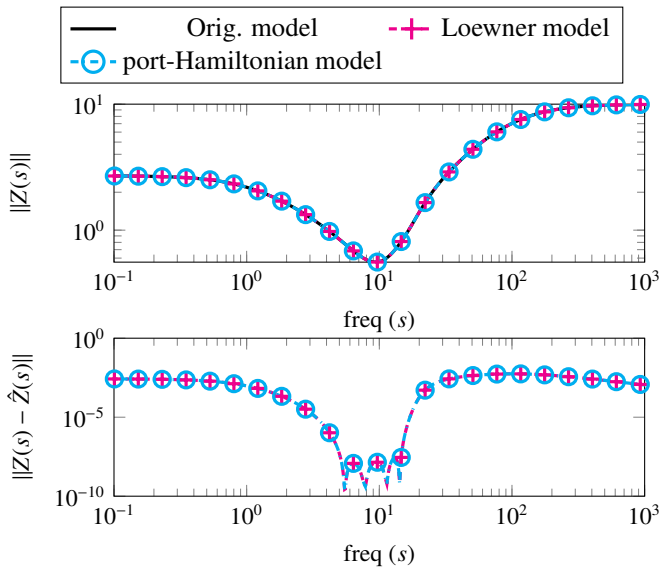}%

	}
	\caption{Frequency limited RLC circuit: Comparison of the Bode plots of the original and Loewner model.}
	\label{fig:RLC_TF_LoewFL}
\end{figure}

\section{Conclusions}
In this work, we have studied the identification problem for strictly passive systems. We have proposed a variant of the classical Loewner approach \cite{MaA07}, which constructs a realization in port-Hamiltonian form. We have also discussed a two-step procedure which allows us to construct a port-Hamiltonian realization using data on the imaginary axis. Furthermore, we have investigated the construction of frequency-limited port-Hamiltonian realization, which can also be viewed as a frequency-limited model-order reduction scheme for passive systems. We have illustrated the proposed methods by means of a couple of variants of electrical circuits. As a future direction, it would be interesting to investigate an identification problem of second-order passive systems by extending the idea proposed in \cite{duff2019data}.
\bibliographystyle{elsarticle-num}
\bibliography{mor}

\begin{thebibliography}{10}
\expandafter\ifx\csname url\endcsname\relax
  \def\url#1{\texttt{#1}}\fi
\expandafter\ifx\csname urlprefix\endcsname\relax\def\urlprefix{URL }\fi
\expandafter\ifx\csname href\endcsname\relax
  \def\href#1#2{#2} \def\path#1{#1}\fi

\bibitem{morBenQQ04a}
P.~Benner, E.~S. Quintana-Ort\'{\i}, G.~Quintana-Ort\'{\i}, Computing passive
  reduced-order models for circuit simulation, in: Proc. Intl. Conf. Parallel
  Comp. in Elec. Engrg. PARELEC 2004, IEEE Computer Society, Los Alamitos, CA,
  2004, pp. 146--151.

\bibitem{morDanP02}
L.~Daniel, J.~Phillips, Model order reduction for strictly passive and causal
  distributed systems, in: Proc. 39rd ACM IEEE Design Automation Conference,
  2002.

\bibitem{morFreF96a}
R.~W. Freund, P.~Feldmann, Reduced-order modeling of large passive linear
  circuits by means of the {SyPVL} algorithm, in: Technical Digest of the 1996
  IEEE/ACM International Conference on Computer-Aided Design, IEEE Computer
  Society Press, 1996, pp. 280--287.

\bibitem{morPhiDS03}
J.~R. Phillips, L.~Daniel, M.~Silveira, Guaranteed passive balancing
  transformations for model order reduction, IEEE Trans. Comput.-Aided Design
  Integr. Circuits Syst. 22~(8) (2003) 1027--1041.

\bibitem{gugercin2012structure}
S.~Gugercin, R.~V. Polyuga, C.~Beattie, A.~Van Der~Schaft, Structure-preserving
  tangential interpolation for model reduction of port-{H}amiltonian systems,
  Automatica 48~(9) (2012) 1963--1974.

\bibitem{polyuga2012effort}
R.~V. Polyuga, A.~J. van~der Schaft, Effort-and flow-constraint reduction
  methods for structure preserving model reduction of port-{H}amiltonian
  systems, Systems Control Lett. 61~(3) (2012) 412--421.

\bibitem{wolf2010passivity}
T.~Wolf, B.~Lohmann, R.~Eid, P.~Kotyczka, Passivity and structure preserving
  order reduction of linear port-{H}amiltonian systems using krylov subspaces,
  Eur. J. Control 16~(4) (2010) 401--406.

\bibitem{Wil72b}
J.~C. Willems, Dissipative dynamical systems part {II}: {L}inear systems with
  quadratic supply rates, Arch. Ration. Mech. Anal. 45~(5) (1972) 352--393.

\bibitem{BMX16}
C.~Beattie, V.~Mehrmann, H.~Xu, Port-{H}amiltonian realizations of linear time
  invariant systems, Tech. rep., TU Berlin (2016).

\bibitem{MeV19}
V.~Mehrmann, P.~Van~Dooren, Optimal robustness of port-{H}amiltonian systems,
  Tech. rep., TU Berlin (2019).

\bibitem{HaK75}
M.~Hazewinkel, R.~E. Kalman, On invariants, canonical forms and moduli for
  linear, constant, finite dimensional, dynamical systems, in: Mathematical
  Systems Theory, Springer, 1976, pp. 48--60.

\bibitem{Ant05}
A.~C. Antoulas, Approximation of Large-Scale Dynamical Systems, {SIAM}
  Publications, Philadelphia, PA, 2005.

\bibitem{MaA07}
A.~J. Mayo, A.~C. Antoulas, A framework for the solution of the generalized
  realization problem, Linear Algebra Appl. 425~(2-3) (2007) 634--662.

\bibitem{YoS67}
D.~C. Youla, M.~Saito, Interpolation with positive real functions, Journal of
  the Franklin Institute 284~(2) (1967) 77--108.

\bibitem{morAnt05a}
A.~C. Antoulas, A new result on passivity preserving model reduction, Systems
  Control Lett. 54~(4) (2005) 361--374.

\bibitem{morGugA03}
S.~Gugercin, A.~C. Antoulas, A survey of balancing methods for model reduction,
  in: Proc. European Control Conf. ECC~2003, Cambridge, UK, 2003, {CD Rom}.

\bibitem{duff2019data}
I.~P. Duff, P.~Goyal, P.~Benner, Data-driven identification of
  {R}ayleigh-damped second-order systems, {arXiv}:1910.00838 (2019).

\end{thebibliography}

\end{document}